\newcommand\NoThen{\renewcommand\algorithmicthen{}}
\definecolor{Darkgray}{gray}{0}
\newtheorem{proposition}{\bf Proposition}
\newcommand{\ve}[1]{\boldsymbol{#1}}
\newcommand{\argmax}{\operatornamewithlimits{argmax}}
\newcommand{\argmin}{\operatornamewithlimits{argmin}}
\newcolumntype{I}{!{\vrule width 1.2pt}}
\def\hlinewd#1{%
\noalign{\ifnum0=`}\fi\hrule \@height #1 %
\futurelet\reserved@a\@xhline}
\newenvironment{rcases}
  {\left.\begin{aligned}}
  {\end{aligned}\right\rbrace}
\let\ps@plain\ps@fancy
\begin{document}

\title{Competitive Energy Trading Framework for Demand-side Management in Neighborhood Area Networks}
\author{Chathurika~P.~Mediwaththe,~\IEEEmembership{Student~Member,~IEEE,}
        Edward~R.~Stephens,~David~B.~Smith,~\IEEEmembership{Member,~IEEE,}
        and~Anirban~Mahanti
\thanks{C. P. Mediwaththe, E. R. Stephens, D. B. Smith and A. Mahanti are with Data61 (NICTA), CSIRO,
Eveleigh, NSW 2015, Australia. e-mail: (m.mediwaththe@student.unsw.edu.au, stephens.ted1@gmail.com, David.Smith@data61.csiro.au, Anirban.Mahanti@gmail.com).}%
\thanks{C. P. Mediwaththe, E. R. Stephens and A. Mahanti are also with the University of New South Wales, Australia.}
\thanks{D. B. Smith is also with the Australian National University, Australia.}%
}

\maketitle
\begin{abstract}

This paper, by comparing three potential energy trading systems, studies the feasibility of integrating a community energy storage (CES) device with consumer-owned photovoltaic (PV) systems for demand-side management of a residential neighborhood area network. We consider a fully-competitive CES operator in a non-cooperative Stackelberg game, a benevolent CES operator that has socially favorable regulations with competitive users, and a centralized cooperative CES operator that minimizes the total community energy cost. The former two game-theoretic systems consider that the CES operator first maximizes their revenue by setting a price signal and trading energy with the grid. Then the users with PV panels play a non-cooperative repeated game following the actions of the CES operator to trade energy with the CES device and the grid to minimize energy costs. The centralized CES operator cooperates with the users to minimize the total community energy cost without appropriate incentives. The non-cooperative Stackelberg game with the fully-competitive CES operator has a unique Stackelberg equilibrium at which the CES operator maximizes revenue and users obtain unique Pareto-optimal Nash equilibrium CES energy trading strategies. Extensive simulations show that the fully-competitive CES model gives the best trade-off of operating environment between the CES operator and the users.

\end{abstract}

\begin{IEEEkeywords}
 Community energy storage, demand-side management, game theory, neighborhood area network.
\end{IEEEkeywords}

\IEEEpeerreviewmaketitle

\section{Introduction}
Smart grid developments facilitate reliable and economical demand-side management for leveling peak energy demands and reducing energy costs \cite{MohensianGT}. Small-scale demand-side management as in residential gated communities has received attention with the increasing popularity and cost reductions of household-distributed renewable power generation and storage technologies. Community energy storage (CES) devices can be integrated with novel small-scale demand-side management approaches to efficiently utilize onsite energy generation from consumer-owned renewable power resources such as rooftop photovoltaic (PV) systems \cite{ZHU}. These methods can create value for end users by reducing energy costs without modifying their electricity demand patterns \cite{Albadi}. In the context of the future energy grid, small-scale demand-side management with CES devices would play a vital role with the rapid growth of solar mini-grids such as Melbourne's Codstream Network \cite{minigrid}.

Small-scale demand-side management with a CES device and behind-the-meter PV systems requires a feasible framework with suitable incentives for all players. One potential approach is to devise the load management optimization that is controlled by a centralized entity. However, such centralized strategies may inflate costs for individual players. Moreover, robust operation of such a system would require system-wide information available to the controller including users' energy information that may increase the communication overhead \cite{smallScale}. In addition, the residential users may not subscribe to such a demand-side management approach as the central entity controls their personal energy decisions. A decentralized framework that can distribute the energy decision-making to individuals would be an effective alternative to overcome the above challenges.

In this paper, three energy trading systems with different CES operator structures are compared: a fully-competitive CES operator in a non-cooperative Stackelberg game, a benevolent CES operator that has socially favorable regulations with competitive consumers, and a centralized cooperative CES operator that collaborates with the users to minimize the total community energy costs. The former two systems use game-theoretic approaches where the CES operator (leader) moves first to maximize revenue. Users (followers) then follow the CES operator's actions to independently determine optimal CES energy trading strategies in a non-cooperative finitely repeated game. The fully-competitive CES operator has two degrees of freedom to maximize revenue in a non-cooperative Stackelberg game: energy price and their energy transactions with the grid. Conversely, the benevolent CES operator's ability to maximize revenue is restricted with only one degree of freedom, i.e., their energy transactions with the grid, in a Stackelberg game. The centralized system serves as a baseline to compare the performance of the decentralized game-theoretic energy trading systems. We have the following main contributions in this work:

\begin{enumerate}
\item The non-cooperative Stackelberg game between the fully-competitive CES operator and users has a unique Stackelberg equilibrium where users obtain unique Pareto-optimal Nash equilibrium CES energy trading strategies.
\item Performance analysis demonstrates,
\begin{itemize}
\item Unlike in the centralized cooperative system, both CES operator and users are simultaneously benefited in the fully-competitive system while the grid experiences load leveling.
\item The community economic benefit is greater with the fully-competitive system than the benevolent CES model, and the fully-competitive system can be implemented effectively with least CES battery storage capacity of the three models.
\end{itemize} 
\end{enumerate}

The majority of Stackelberg game-theoretic demand-side management methods in literature exploit demand flexibility of consumers to obtain optimal system-wide objectives \cite{Kilkki, Yu, chen_chen}. For example, the Stackelberg game in \cite{Yu } achieves the optimal load control of electrical appliances through an effective real time pricing method. To the best of our knowledge, few game-theoretic works achieve optimal load management by utilizing energy from distributed energy resources as an alternative to reshaping consumer demand profiles \cite{Atzeni2,Atzeni}. For example, the non-cooperative game in \cite{Atzeni} determines optimal power settings of consumer-owned controllable power sources, such as gas turbines and energy storage devices, to minimize energy costs. In contrast to the use of consumer-owned energy storage devices as in \cite{Atzeni}, we study the use of increased flexibility of a centralized CES device to utilize uncontrollable and intermittent PV power generation to achieve demand-side management without reshaping user demand. To this end, we investigate the leader-follower interaction between the CES operator and the users using Stackelberg games.

This work has two key differences to \cite{TSG1} where a non-cooperative dynamic game between users is studied evaluating only users' autonomy to minimize costs. First, here, we devise bi-level energy trading systems to incorporate autonomies of both CES operator and users to minimize energy costs using Stackelberg games. We also investigate the trade-off between the CES capacity and community benefits, whereas \cite{TSG1} does not impose energy capacity constraints for the CES device by assuming that it has sufficient capacity at all times.

The remainder of this paper is structured as follows. Section~\ref{sec:2} presents related work. The system models are described in Section~\ref{sec:3}, and Section~\ref{sec:4a} describes the centralized energy trading system. The two game-theoretic energy trading systems are discussed in Section~\ref{sec:4}. Section~\ref{sec:5} discusses simulation results, and conclusions are drawn in Section~\ref{sec:6}.

\section{Related Work}\label{sec:2}
There is a rich literature on demand-side management that exploits user demand flexibility to achieve economic power system improvements. For example, dynamic pricing for consumption scheduling \cite{MohsenianRT}, load shifting methods \cite{Babu,Chi,MohensianGT}, and incentive-based demand response programs \cite{Parvania,Caron} have been investigated. We study demand-side management with a CES device to utilize household-distributed PV power generation without modifying users' energy demands.

Prior works have examined centralized control of distributed power resources, such as renewable power sources and storage devices, for effective energy management \cite{Kanchev,Chen}. Decentralized control of energy resources has been proposed to increase system reliability and robustness \cite{smallScale}. In particular, game theory has been applied to analyze interactions between distributed energy resources in power system \cite{gametheoryMicrogrid, Chattopadhyay}. The authors in \cite{Adika} achieve cost-effective energy management through a non-cooperative game that schedules consumer-owned energy storage devices and appliances. In \cite{Rajasekharan}, the authors study a cooperative game-theoretic approach to achieve optimal load balancing using a CES device where users share the stored energy of the CES device to contribute towards community's overall demand-side management. In contrast, we investigate a non-cooperative hierarchical energy trading system between a CES device and users based on Stackelberg game theory. Moreover, compared to \cite{Rajasekharan}, the charging and discharging mechanism of the CES device in this paper employs user-owned PV energy generation to achieve demand-side management.

One branch of demand-side management literature employs Stackelberg game theory to study interactions between utility companies and consumers for optimal demand-side management \cite{Stack3, sabita, GameThPeng,Stack2}. Another branch of Stackelberg game-theoretic demand-side management research explores the interaction of energy consumers with energy market intermediaries such as aggregators that represent demand-side \cite{Stack4, Nekouei, Sarker}. In these papers, the aggregator acts as the middleman between electricity users and the energy market and operates as the energy supplier to the users by selling electricity bought from the utility to downstream consumers. Moreover, an aggregator is responsible to effectively adjust the consumers' aggregated energy demand profile declared to the energy market. To this end, aggregators utilize demand flexibility of users so that the users' compound demand profile satisfies power system needs such as operating limits of the distribution network \cite{Sarker}.

\textit{The proposed energy trading model in this paper has several differences compared to classical aggregation-based approaches.} In particular, the CES operator does not operate as the middleman between the users and the energy market and does not control the amalgamated demand profile of the users. Rather, the CES operator acts as a third-party that encourages users with PV energy generation to trade PV energy with the CES device that can be dispatched to supply peak energy demand of the participating users. \textit{Furthermore, in our system, participating users interact separately with the utility and the CES operator whereas in classical aggregator-based approaches, users merely interact with the aggregator.}

The Stackelberg game between a shared-facility controller and users in \cite{TusharDavid} yields effective demand-side management by managing consumer demand with an energy storage device at the controller-side that is enabled to charge and discharge with the grid. In contrast, in this paper, the charging and discharging mechanism of the CES device is intended to accommodate energy trading strategies from PV energy generation of users. In doing so, we focus on exploiting onsite energy generation from user-owned PV systems for demand-side management as an alternative to energy consumption scheduling of users.

\begin{table*}[t]
\centering
 \caption{table of notation.}
  \begin{tabular}{|m{2 cm}|m{6 cm}||m{2 cm}|m{6 cm}|}
  \hline
Variable/parameter & Definition & Variable/parameter & Definition  \\
    \hline\hline
   
   $\mathcal{A}$ & Set of participating users. & $X_{\mathcal{A}}(t)$ & Aggregate CES energy trading amounts of the users $\mathcal{A}$ at time $t$. \\
   \hline
    
    $\mathcal{P}$ & Set of non-participating users. & $C_{CES}(t)$ & Cost of the CES operator at time $t$. \\
    \hline
    
    $\mathcal{S}(t)$ & Set of participating users with excess PV energy at time $t$. & $C_{\mathcal{A}}(t)$ & Aggregate costs of the users $\mathcal{A}$ at time $t$.\\
    \hline
    
    $\mathcal{D}(t)$ & Set of participating users with energy deficits at time $t$. & $S_{\mathcal{A}}(t)$ & Aggregate surplus energy of the users $\mathcal{A}$ at time $t$.\\
    \hline
    
    $\mathcal{T}$ & Time period of analysis. & $\beta^+,~\beta^-$ & Charging and discharging inefficiencies of the CES device;~ $(0<\beta^+\leq1,~\beta^-\geq1)$. \\
    \hline
    
    $H$ & Total number of time steps in $\mathcal{T}$. &  $\alpha$ & Leakage rate of the CES device;~$(0<\alpha \leq1)$.\\
    \hline
    
    $I$ & Number of users in $\mathcal{A}$.  & $Q_M$ & Maximum energy capacity of the CES device. \\
    \hline
    
    $x_n(t)$ & Energy traded with the CES device by user $n$ at time $t$. & $L_{\text{max}}$ & Maximum allowable grid load.\\
    \hline
    
     $l_n(t)$ & Energy traded with the grid by user $n$ at time $t$. & $v,~w$ & Indices of rows and columns of a matrix.\\
    \hline
    
      $e_n(t)$ & Energy demand of user $n$ at time $t$. & $\delta_t,~\phi_t$ & Positive time-of-use tariff constants of the grid at time $t$.\\
    \hline
    
   $g_n(t)$ & PV energy generation of user $n$ at time $t$. & $\theta$ & Real-valued scalar.\\
    \hline
    
    $s_n(t)$ & Surplus PV energy of user $n$ at time $t$. & $r$ & Iteration number.\\
    \hline
    
      $l_Q(t)$ & Energy traded between the CES device and the grid at time $t$. &  $\tau$ & Small positive value.\\
    \hline
    
     $q(t)$ & CES charge level at the end of time $t$.& $\Gamma$ & Stage game of the repeated game among the users $\mathcal{A}$.\\
    \hline
    
    $L(t)$ & Total grid load at time $t$.&$\mathcal{X}$ & Strategy set of the users $\mathcal{A}$ at time $t$.\\
    \hline
    
       $l_{\mathcal{P}}(t)$ & Total grid load of the users $\mathcal{P}$ at time $t$. & $R$ & Revenue of the CES operator.\\
    \hline
    
 $p(t)$ & Unit grid energy price at time $t$. &$\mathcal{C}$ & Set of cost functions of the users $\mathcal{A}$ at time $t$.  \\
    \hline
    
    $a(t)$ & Unit energy price charged by the CES device at time $t$. &  $\mathcal{Q}$ & Strategy set of the CES operator. \\
    \hline
    
   $C_n(t)$ & Total energy cost of user $n$ at time $t$. & $\Upsilon$ & Stackelberg game between the fully-competitive CES operator and the users $\mathcal{A}$. \\
    \hline
    
    $L_{-n}(t)$ & Total grid load except the grid load of user $n$ at time $t$.  & $\mathcal{L}$ & CES operator. \\
    \hline
    
    $\ve{X_n(t)}$ & Strategy set of user $n$ at time $t$.& $\ve{a}$ & Vector of CES energy price;~$\ve{a}\in \Re^{H\times 1} $. \\
   \hline
   
 $\tilde{x}_n(t)$ & Best response of CES energy trading strategy of user $n$ at time $t$.& $\ve{l_Q}$ & Vector of grid energy of the CES device;~$\ve{l_Q}\in \Re^{H\times 1}$.\\
   \hline
   
    $\ve{x(t)}$ & CES energy trading strategy profile of the users $\mathcal{A}$ at time $t$;~$\ve{x(t)} \in \Re^{1\times I}$. & $\ve{\rho}$ & Matrix of decision variables of the CES operator.\\ 
   \hline 
   \end{tabular}\label{notation}
\end{table*} 
    

\section{System Configuration}\label{sec:3}
In this section, we describe the classification of energy consumers and the models of energy costs and the CES device. The definitions of notations in the subsequent sections are given in Table~\ref{notation}.
\subsection{Demand-Side Model}
The demand-side of the community is divided into participating users $\mathcal{A}$ and non-participating users $\mathcal{P}$. The users $\mathcal{A}$ have their own PV panels without local energy storage devices, and they participate in the energy management optimization by trading energy with the grid and/or the CES device. We assume that each user in $\mathcal{A}$ has a decision-making controller in their household to perform their local energy trading optimization. The users $\mathcal{P}$ consume energy only from the grid as they do not have local power generation capabilities and do not participate in the demand-side management optimization.

The users $\mathcal{A}$ are subdivided into two time-dependent categories: surplus users $\mathcal{S}(t)$ and deficit users $\mathcal{D}(t)$.  We divide the time period $\mathcal{T}$, typically one day, into $H$ equal time steps of length $\Delta{t}$ with discrete time $t = 1,2,\dotsm,H$. We consider $\lvert \mathcal{S}(t) \rvert=I_{\mathcal{S}(t)}$, $\lvert \mathcal{D}(t) \rvert=I_{\mathcal{D}(t)}$, and $\lvert \mathcal{A} \rvert= I = I_{\mathcal{S}(t)}+ I_{\mathcal{D}(t)}$.

At each time $t$, user $i\in \mathcal{S}(t)$ evaluates the optimal energy amount that they can sell to the CES device, and user $j\in \mathcal{D}(t)$ decides optimal energy amount that can be bought from the CES device. These strategies are determined day-ahead, and we assume that the users $\mathcal{A}$ have accurate forecasts of their energy demands and PV power generation for the next day. According to the energy balance at user $n\in \mathcal{A}$
\begin{align}
l_n(t)=x_n(t)+e_n(t)-g_n(t). \label{eq:id2}
\end{align}

Note that $x_n(t)>0$ when the user is charging (or selling energy to) the CES device and $x_n(t)<0$ when discharging (or buying energy from) the CES device.
The surplus energy of $n\in \mathcal{A}$ at time $t$ is given by $s_n(t)=g_n(t)-e_n(t)$.
We specify
\begin{equation}
\begin{split}
0\leq x_i(t) \leq s_i(t), \quad \forall i \in \mathcal{S}(t),~t \in \mathcal{T}, \\
s_j(t)\leq x_j(t) \leq 0, \quad \forall j \in \mathcal{D}(t),~t \in \mathcal{T}. \label{eq:id5}
\end{split}
\end{equation}
\subsection{Community Energy Storage Model}
In this paper, the energy storage model is similar to that in \cite{Atzeni}. At each time $t$, the CES device may exchange energy $l_Q(t)$ with the grid in addition to its energy transactions with the users $\mathcal{A}$. Here, $l_Q(t)>0$ if the CES device is charging from the grid, and $l_Q(t)<0$ if it is selling energy to the grid.

Without loss of generality, consider splitting $x_n(t)$ and $l_Q(t)$ such that $x_n(t)=x_n^+(t)-x_n^-(t)$ and $l_Q(t)=l_Q^+(t)-l_Q^-(t)$ where $x_n^+(t),~l_Q^+(t)\geq 0$ are the charging strategy profiles and $x_n^-(t),~l_Q^-(t)\geq 0$ are the discharging strategy profiles of the CES device at time $t$. Once inefficiencies are introduced, all optimal solutions satisfy $x_n^+(t)x_n^-(t)=0$ and $l_Q^+(t)l_Q^-(t)=0$ at all times to avoid simultaneous charging and discharging of the CES device \cite{Atzeni}. We introduce $\beta^+$ and $\beta^-$ to consider conversion losses of the CES device. For instance, if $x^+$ energy is sold to the CES device, then the charge level only increases by $\beta^+ x^+$. Similarly, $\beta^- x^-$ energy must be discharged to obtain $x^-$ energy from the CES device. If $q(t-1)$ is the charge level at the beginning of time $t$, then $q(t)$ is given by
\begin{equation}
q(t)=\alpha q(t-1)+\beta^+ \chi^+ - \beta^- \chi^- \\
\label{eq:id7}
\end{equation}
where $\chi^+ = \left[\sum_{n=1}^{I}x_n^+(t)+l_Q^+(t)\right]$ and $\chi^- = \left[\sum_{n=1}^{I}x_n^-(t)+l_Q^-(t)\right]$.

Using \eqref{eq:id7}, we write \eqref{eq:id8} to ensure the CES charge level within its energy capacity limit at each time $t$ as
\begin{equation}
\ve{0}\preceq q(0) \ve{\eta}+\Psi \left(\ve{\chi^+}-\ve{\chi^-}\right)\ve{\beta}\preceq \ve{Q_M} \label{eq:id8}
\end{equation}
where $q(0)$ is the initial charge level, and $\ve{Q_M} \in \Re^{H\times 1}$ with all its entries being $Q_M$. Additionally, $\ve{\eta} \in \Re^{H\times 1}$ has $[\ve{\eta}]_v=\alpha^v$, $\Psi\in \Re^{H\times H}$ is a lower triangular matrix that has elements $[\Psi]_{v,w}=\alpha^{v-w}$, $\ve{\beta}=[\beta^+,\beta^-]^T$, $\ve{0}$ is the $H$-dimensional zero column vector, and $\ve{\chi^+},~\ve{\chi^-}$ are the $H$-dimensional column vectors with all their entries being $\chi^+,~\chi^-$, respectively. Note that \eqref{eq:id8} is the vector algebra form of the linear recurrence relation \eqref{eq:id7}.

Assuming $q(0)$ is within the CES device's safe operating region, we set \eqref{eq:id9} to ensure the continuous operation of the CES device for the next day and to prevent over-charging or over-discharging during $\mathcal{T}$ \cite{Chen}
\begin{equation}
q(H)=q(0). \label{eq:id9}
\end{equation}

\subsection{Energy Cost Models}\label{sec:3a}

The unit electricity price of the grid at time $t$ is assumed to have a constant baseline component and a variable real-time component that is proportional to the total grid load at time $t$ \cite{MohensianGT,Atzeni}. In this work, the total grid load at time $t$ is $L(t)=\sum_{n=1}^I l_n(t)+l_Q(t)+l_{\mathcal{P}}(t)$. In this paper, we assume, at each time $t$, $0<L(t)$ for non-negative grid pricing and $L(t)<L_{\text{max}}$ where $L_{\text{max}}$ is the maximum allowable load on the grid without compromising voltage and line capacity limits of the grid. The unit electricity price of the grid at time $t$ is given by $p(t)=\phi_t L(t)+\delta_t$ where $\phi_t$ and $\delta_t$ are determined according to a day-ahead market clearing process \cite{Atzeni}. With similar analysis to \cite{MohensianGT}, the resulting grid energy cost function at time $t$, $p(t)L(t)$, is a strictly convex function with respect to $L(t)$.

In our game-theoretic systems, the CES operator adopts prices for energy transactions with the users $\mathcal{A}$. Then, in these systems $C_n(t)$ is given by
\begin{equation}
C_n(t)=p(t)l_n(t)-a(t)x_n(t) \label{eq:id11}
\end{equation}
where $l_n(t)$ is given by \eqref{eq:id2}. Both fully-competitive and benevolent CES operators obtain revenue through energy trading with the grid and the users $\mathcal{A}$. Assuming that the CES operator exchanges energy with the grid at the grid energy price, we consider the CES revenue as
\begin{equation}
R=\sum_{t=1}^H {\left(-a(t)\sum_{n=1}^I x_n(t)-p(t)l_Q(t)\right)}. \label{eq:id12}
\end{equation}

However, in the centralized energy trading system, the CES operator does not obtain revenue from energy trading with the users $\mathcal{A}$. In this regard, we do not consider a separate revenue function as \eqref{eq:id12}, and the cost of user $n\in \mathcal{A}$ is derived as in \eqref{eq:id11} disregarding the term $a(t)x_n(t)$.

\section{Centralized energy trading system}\label{sec:4a}

This section describes the community energy trading system with a centralized cooperative CES operator that solves the optimization problem of minimizing the total energy cost paid by the entire community to the grid. Note that this centralized approach serves as a baseline to compare the performance of the decentralized game-theoretic systems in Section~\ref{sec:4}. Here, we assume that the users $\mathcal{A}$ communicate their energy demand and PV energy generation profiles to the CES operator, and the operator also has the perfect knowledge of the community participation percentage. The CES operator schedules the energy transactions across the community by solving the optimization problem
\begin{equation}
\min \sum_{t=1}^H{p(t)L(t)} \label{eq:id12a}
\end{equation}
subjects to constraints \eqref{eq:id5}, \eqref{eq:id8}, and \eqref{eq:id9}.

Note that in this system, \eqref{eq:id12a} does not include a price signal for the CES operator's energy transactions with the users $\mathcal{A}$ and consequently, the operator has no direct incentive. It also requires impractical information exchange and cooperation. The cooperating participating users similarly do not have direct incentives as their personal energy costs may inflate for the benefit of the overall community. All of these reasons make the centralized cooperative energy trading system less feasible. Even though this centralized approach is not appropriate for the general circumstances of the considered energy trading scenario, it is still a potential implementation of the energy trading between the CES device and the users $\mathcal{A}$.

\section{Decentralized energy trading systems}\label{sec:4}

In our decentralized energy trading systems, the CES operator, as the leader, interacts with the users $\mathcal{A}$ to maximize their revenue \eqref{eq:id12}. The users $\mathcal{A}$ follow the leader's actions to minimize their individual energy costs in \eqref{eq:id11} by manipulating $x_n(t)$. We develop Stackelberg game-theoretic frameworks to analyze the hierarchical CES-user energy trading interactions. To derive the solutions to the Stackelberg games, we adopt insights from backward induction \cite{gametheoryessentials}. To this end, first, the actions of the users $\mathcal{A}$ are derived based on the knowledge of actions of the CES operator. Then our analysis proceeds backward to determine the actions of the CES operator.

\subsection{Objective of the Participating Users}

Here, each user $n\in \mathcal{A}$ seeks to minimize their personal energy costs. Therefore, in response to any suitable $\ve{\rho}=[\ve{a},~\ve{l_Q}]$ of the CES operator, user $n\in \mathcal{A}$ minimizes their energy cost in \eqref{eq:id11} at each time $t$. The cost function \eqref{eq:id11} is quadratic with respect to both $l_n(t)$ and $x_n(t)$. We consider
\begin{equation}
C_n(t)=K_2 l_n(t)^2 + K_1 l_n(t)+K_0 \label{eq:id13}
\end{equation}
where $K_2=\phi_t$, $K_1=(\phi_tL_{-n}(t)+\delta_t-a(t))$, and $K_0=-a(t)s_n(t)$.

Since \eqref{eq:id13} depends on the actions of the other users $n'\in \mathcal{A}\backslash n$, we formulate a non-cooperative game $\Gamma\equiv\langle \mathcal{A},\mathcal{X},\mathcal{C}\rangle$ among the users $\mathcal{A}$ at each time $t\in \mathcal{T}$ to determine their optimal strategies. Here, $\mathcal{X}=\prod_{n=1}^I \ve{X_n(t)}$ where $\ve{X_n(t)}$ of user  $n\in \mathcal{A}$ subject to constraints \eqref{eq:id5}, and $\mathcal{C}=(C_1(t),\ldots,C_I(t))$. We denote $\ve{x(t)}=[x_1(t),\dotsm,x_I(t)]$. Each user $n\in \mathcal{A}$ selects their strategy $x_n(t)\in\ve{X_n(t)}$ to minimize the cost function $C_n(x_n(t),\ve{x_{-n}(t)})\equiv C_n(t)$. Here, $\ve{x_{-n}(t)}$ is the CES energy transaction strategy profile of the users $n'\in \mathcal{A}\backslash n$. Therefore, each user $n\in \mathcal{A}$ determines
\begin{equation}
\tilde{x}_n(t)=\argmin_{x_n(t)\in \ve{X_n(t)}}C_n(x_n(t),\ve{x_{-n}(t)}).\label{eq:id13b}
\end{equation}

To make the game-theoretic analysis tractable, we assume that the users $\mathcal{A}$ have accurate day-ahead predictions of PV power generation and energy demand. Consequently, playing the game $\Gamma$ at each time $t = 1,2,\dotsm,H$ by the users $\mathcal{A}$ using $\ve{\rho}$ turns into a non-cooperative finitely repeated game with perfect information where $\Gamma$ is the stage game \cite{gametheoryessentials}.

\begin{proposition}
For any given values of $a(t)$ and $l_Q(t)$, the stage game $\Gamma$ obtains a unique pure-strategy Nash equilibrium.
\end{proposition}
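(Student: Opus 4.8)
The plan is to prove existence and uniqueness separately, using that $\Gamma$ is a convex game with scalar, interval-constrained actions. First I would note that, by \eqref{eq:id5}, each strategy set $\ve{X_n(t)}$ is a nonempty, compact, convex interval of $\Re$ (namely $[0,s_i(t)]$ for a surplus user and $[s_j(t),0]$ for a deficit user), so the joint action space $\mathcal{X}$ is nonempty, compact, and convex. Substituting $l_n(t)=x_n(t)-s_n(t)$ into \eqref{eq:id13} shows that $C_n$ is jointly continuous in $\ve{x(t)}$ and, since $K_2=\phi_t>0$, strictly convex in its own action $x_n(t)$ for every fixed $\ve{x_{-n}(t)}$. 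These are precisely the hypotheses of the classical existence theorem for games with continuous payoffs and convex, compact action sets \cite{gametheoryessentials}, so a pure-strategy Nash equilibrium of $\Gamma$ exists.

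For uniqueness I would apply Rosen's diagonal strict convexity criterion. I would form the pseudo-gradient $\ve{g}(\ve{x(t)})$ whose $n$-th entry is $\partial C_n/\partial x_n$, evaluate it from \eqref{eq:id13} using $\partial l_n(t)/\partial x_n=1$ and the fact that $L_{-n}(t)$ collects all rival grid loads but excludes user $n$'s own, and then compute the Jacobian $\ve{G}$ of $\ve{g}$. The key observation is that $\ve{G}$ takes the simple form $\phi_t(\mathbf{I}+\mathbf{1}\mathbf{1}^{T})$, i.e., diagonal entries equal to $2\phi_t$ (own-action curvature) and every off-diagonal entry equal to $\phi_t$ (the common coupling through the aggregate grid load). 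It then remains to show that $\ve{G}+\ve{G}^{T}=2\phi_t(\mathbf{I}+\mathbf{1}\mathbf{1}^{T})$ is positive definite, which holds because $\phi_t>0$ and $\mathbf{I}+\mathbf{1}\mathbf{1}^{T}$ has eigenvalues $1+I$ (once) and $1$ (multiplicity $I-1$), all strictly positive. Diagonal strict convexity then forces at most one equilibrium, and combined with existence this establishes the unique pure-strategy Nash equilibrium.

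The differentiation is routine; the step needing care is verifying that $\ve{G}$ really has this constant off-diagonal structure, which requires correctly tracking how each $L_{-n}(t)$ absorbs every rival load while omitting user $n$'s own, and confirming that the fixed CES choices $a(t)$ and $l_Q(t)$ enter only the affine part of $\ve{g}$ and hence leave $\ve{G}$ unchanged. Because $\ve{G}$ is symmetric, this in fact exhibits $\Gamma$, for any fixed $a(t)$ and $l_Q(t)$, as an exact potential game whose potential has Hessian $\ve{G}\succ 0$; the equilibrium can therefore be characterised equivalently as the unique minimiser of a strictly convex potential over the convex set $\mathcal{X}$. I would finish by observing that strict convexity of $C_n$ in $x_n(t)$ yields single-valued best responses, so no degenerate boundary equilibria arise.
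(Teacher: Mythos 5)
Your proof is correct, and it is in fact more complete than the paper's own argument. The paper's proof takes the same skeleton — strict convexity of each $C_n(t)$ in $x_n(t)$ (since $K_2=\phi_t>0$), compactness and convexity of the strategy sets from \eqref{eq:id5}, and a citation of Rosen — but it stops there, implicitly treating own-action strict convexity plus compactness as sufficient for uniqueness. Strictly speaking those hypotheses only deliver existence (Rosen's existence theorem, or equivalently the Debreu--Glicksberg--Fan argument you invoke); uniqueness in Rosen's framework requires the diagonal strict convexity/concavity condition, which the paper never verifies. You supply exactly the missing step: the pseudo-gradient Jacobian is $\ve{G}=\phi_t(\mathbf{I}+\mathbf{1}\mathbf{1}^{T})$ (your bookkeeping of $L_{-n}(t)$ is right, since $\partial l_{n'}(t)/\partial x_{n'}(t)=1$ and $a(t)$, $l_Q(t)$ enter only affinely), and $\ve{G}+\ve{G}^{T}=2\phi_t(\mathbf{I}+\mathbf{1}\mathbf{1}^{T})\succ 0$ because its eigenvalues are $2\phi_t(I+1)$ and $2\phi_t$. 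Your closing observation that the symmetry of $\ve{G}$ makes $\Gamma$ an exact potential game with a strictly convex potential is a genuinely different, self-contained route to the same conclusion: it buys an elementary uniqueness argument (unique minimiser of a strictly convex function over a compact convex set) without appealing to Rosen at all, and it additionally explains why best-response dynamics such as those underlying \eqref{eq:id17} and Algorithm~1 are well behaved. Either of your two arguments would stand on its own; the paper's version should be read as relying on the diagonal strict convexity check that you carried out explicitly.
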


\begin{proof} Nash equilibrium implies no player can gain by unilaterally changing their own strategy while the others play their Nash equilibrium strategies \cite{gametheoryessentials}. For feasible $\ve{x_{-n}(t)}$, \eqref{eq:id13} is strictly convex since its second derivative with respect to $x_n(t)$ is positive as $\phi_t>0$ \cite{boyd2004convex}. Therefore, each participating user's objective function in \eqref{eq:id13b} is strictly convex. Additionally, the individual strategy sets are compact and convex due to linear inequalities \eqref{eq:id5}. Therefore, a unique Nash equilibrium with pure strategies for the game $\Gamma$ is obtained \cite{rosen}.
\end{proof}

The best response $\tilde{x}_n(t)$ of user $n\in \mathcal{A}$ to $\ve{x_{-n}(t)}$ can be found using
\begin{equation}
\frac{\partial C_n(t)}{\partial x_n(t)}\bigg|_{x_n(t)=\tilde{x}_n(t)}=2K_2 (\tilde{x}_n(t)-s_n(t))+K_1=0. \label{eq:id15}
\end{equation}
By solving \eqref{eq:id15} for all participating users $I$, using the expressions of $K_1$ and $K_2$ in \eqref{eq:id13}, the optimal response of user $n\in \mathcal{A}$ at the Nash equilibrium, $\tilde{x}^{*}_n(t)$, can be written as a function of the CES operator's output variables $a(t)$ and $l_Q(t)$
\begin{gather}
\tilde{x}^{*}_n(t)=s_n(t)-\varepsilon(t),\nonumber \\
\varepsilon(t)=-(I+1)^{-1}[\phi_t^{-1}(a(t)-\delta_t)-l_{\mathcal{P}}(t)-l_Q(t)]. \label{eq:id17}
\end{gather}

Given the Nash equilibrium as in \eqref{eq:id17}, parameters $\delta_t$ and $\phi_t$ can be chosen such that $\tilde{x}^*_n(t)$ satisfies \eqref{eq:id5} for given $a(t),~l_Q(t)$, and $l_{\mathcal{P}}(t)$. However, for general application of our system, we consider constraints in the CES operator's revenue maximization problem so that the CES operator's selection of $a(t)$ and $l_Q(t)$ assures that the Nash equilibrium \eqref{eq:id17} satisfies \eqref{eq:id5}. This procedure is explained in the next subsection.

\subsection{Objective of the Community Energy Storage Operator}\label{CES}

In the decentralized energy trading setting, the CES operator's primary objective is to maximize the revenue in \eqref{eq:id12}. According to backward induction, if we substitute \eqref{eq:id17} into \eqref{eq:id12}, the CES operator's utility maximization can be simplified to a quadratic optimization problem to determine
\begin{equation}
\ve{\rho^*}=\argmax_{\ve{\rho}\in\mathcal{Q}}{\sum_{t=1}^H (\lambda a(t)^2+\mu a(t)+\nu l_Q(t)^2+\xi l_Q(t))} \label{eq:id22}
\end{equation}
where $\lambda=-I(I+1)^{-1}\phi_t^{-1}$, $\mu=I(I+1)^{-1}(l_\mathcal{P}(t)+\phi_t^{-1}\delta_t)-\sum_{n=1}^I{s_n(t)}$, $\nu=-\phi_t(I+1)^{-1}$, and $\xi =-(I+1)^{-1}(\phi_t l_\mathcal{P}(t)+\delta_t)$. To ensure that users' actions obtained from \eqref{eq:id17} satisfy \eqref{eq:id5}, at each time $t$, the CES operator selects $a(t)$ and $l_Q(t)$ such that 

\begin{equation}
\begin{rcases}
               \textrm{max}[\{s_j(t)\}_I]\leq\varepsilon(t)\leq0,~~\text{if}~\mathcal{A}=\mathcal{D}(t), \\
                0\leq\varepsilon(t)\leq \textrm{min}[\{s_i(t)\}_I],~~~\text{if}~\mathcal{A}=\mathcal{S}(t),\\
                \varepsilon(t)=0,~~~~~~~~~~~~~~~~~~~~~~~\text{otherwise.}
\end{rcases}\label{eq:id22a}
\end{equation}
Hence, in addition to \eqref{eq:id8} and \eqref{eq:id9}, we consider \eqref{eq:id22a} in $\mathcal{Q}$ depending on the nature of the users $\mathcal{A}$ at time $t$.

The objective function in \eqref{eq:id22} is strictly concave since its Hessian matrix is negative definite for all $\ve{a},\ve{l_Q}\in\mathcal{Q}$ as coefficients $\lambda,\nu<0$. Moreover, $\mathcal{Q}$ is non-empty, closed and convex as it is only subject to linear constraints. Therefore, \eqref{eq:id22} always has a unique maximum \cite{boyd2004convex}. It is important to note that \eqref{eq:id22} is a quadratic program and hence, it can be solved by using non-linear programming methods such as the interior point method described in \cite{boyd2004convex}.

\subsection{Benevolent CES Operator Model}

After describing the objectives of the users $\mathcal{A}$ and the CES operator, in this section, we analyze the Stackelberg energy trading competition between the benevolent CES operator and the users $\mathcal{A}$. Consider if $x_n(t)=s_n(t),~\forall n\in \mathcal{A}$ and $\forall t\in \mathcal{T}$, then these user strategies are at the Nash equilibrium of the game $\Gamma$ if and only if $\varepsilon(t)=0$ at each time $t$. As a result
\begin{equation}
a(t)=\delta_t+\phi_t (l_Q(t)+l_\mathcal{P}(t)). \label{eq:id21}
\end{equation}

In this model, the price constraint \eqref{eq:id21} applies as an auxiliary constraint for the CES operator at each time $t$ when maximizing their revenue in addition to \eqref{eq:id8} and \eqref{eq:id9}. Then, the objective function in \eqref{eq:id22} can be written as a function of $l_Q(t)$ only by substituting \eqref{eq:id21} into \eqref{eq:id22}. Thus, the objective of the CES operator in the benevolent scenario is to find
\begin{equation}
\ve{\bar{l}_Q}=\argmax_{\ve{l_Q}\in\mathcal{Q}}{\sum_{t=1}^H (\gamma_1 l_Q(t)^2+\gamma_2 l_Q(t)+\gamma_3)} \label{eq:id21a}
\end{equation}
where $\gamma_1=-\phi_t,~\gamma_2=-\delta_t-\phi_t(l_\mathcal{P}(t)+\sum_{n=1}^I{s_n(t)})$, and $\gamma_3=-\sum_{n=1}^I{s_n(t)}(\delta_t+\phi_tl_\mathcal{P}(t))$.

In the Stackelberg game, the CES operator, as the leader, firstly determines optimal $l_Q(t)$ by solving \eqref{eq:id21a} and then their energy price $a(t)$ using \eqref{eq:id21} for each time $t\in \mathcal{T}$. Using these values, the users $\mathcal{A}$ determine optimal strategies of $x_n(t)$ by playing the game $\Gamma$ at each time $t$.  As a result, at the Nash equilibrium of the game $\Gamma$, the energy requirements of the users $\mathcal{A}$ are shifted on to the CES device such that $\tilde{x}^{*}_n(t)=s_n(t)$. Hence, $\tilde{l}^*_n(t)=(\tilde{x}^{*}_n(t)-s_n(t))=0$,~$\forall n \in \mathcal{A},~\forall t \in \mathcal{T}$.

In this system, the CES operator does not have full freedom to maximize the revenue in \eqref{eq:id22} with the additional constraint \eqref{eq:id21}. Here, $\bar{p}(t)=\delta_t+\phi_t(l_\mathcal{P}(t)+\bar{l}_Q(t))=\bar{a}(t)$ where $\bar{l}_Q(t)$ is the CES device's grid load at time $t$ at its maximum revenue in \eqref{eq:id21a} since the users $\mathcal{P}$ and the CES device are the only remaining energy loads on the grid as the users $\mathcal{A}$ shift their net energy requirements $s_n(t)$ to the CES device. We consider this as a benevolent CES operator that is regulated by the users $\mathcal{A}$ to amalgamate all their energy requirements into one entity with storage capabilities for better demand-side management.

\subsection{Fully-competitive CES Operator Model}\label{SecC}

This section explains the non-cooperative Stackelberg game between the fully-competitive CES operator and the users $\mathcal{A}$. In the system, the CES operator first sets $\ve{\rho}$ to maximize their revenue in \eqref{eq:id22} and broadcasts them to the users $\mathcal{A}$. Using these signals, the users $\mathcal{A}$ repeat the non-cooperative game $\Gamma$ at each time $t$. In contrast to the benevolent CES model, in this system, the CES operator regulates $\varepsilon(t)$ by considering the nature of the users $\mathcal{A}$ at each time $t$ as given in \eqref{eq:id22a}. Hence, the CES operator's objective is identical to \eqref{eq:id22}.

In this scenario, the bi-level interaction between the CES operator and the users $\mathcal{A}$ can be formulated as a non-cooperative Stackelberg game $\Upsilon$. We represent the strategic form of $\Upsilon$ as $\Upsilon\equiv\langle \{\mathcal{L},\mathcal{A}\},\{\mathcal{Q},\mathcal{X}\},\{R,\mathcal{C}\}\rangle$ where the CES operator $\mathcal{L}$ is the leader, the users $\mathcal{A}$ are the followers, and all other notations are defined as in the preceding sections.
\begin{proposition}
The game $\Upsilon$ obtains a unique Stackelberg equilibrium.
\end{proposition}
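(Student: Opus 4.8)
The plan is to exhibit a Stackelberg equilibrium of $\Upsilon$ as a pair consisting of the leader's optimal action $\ve{\rho^*}\in\mathcal{Q}$ and the followers' Nash-equilibrium response, and then to argue that each component is unique. Following the backward-induction scheme already set up, I would first fix an arbitrary feasible leader action $\ve{\rho}=[\ve{a},~\ve{l_Q}]\in\mathcal{Q}$ and invoke Proposition~1: for every time $t$ the stage game $\Gamma$ admits a unique pure-strategy Nash equilibrium, given in closed form by \eqref{eq:id17}. Because the followers play a finitely repeated game with perfect information in which $\Gamma$ is the stage game and the stage equilibrium is unique, backward induction over the $H$ stages yields a unique outcome, namely playing the unique stage Nash equilibrium $\tilde{x}^{*}_n(t)$ in every period. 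Hence the followers' best response to any $\ve{\rho}\in\mathcal{Q}$ is a well-defined single-valued map.

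Next I would turn to the leader's problem. Substituting the follower response \eqref{eq:id17} into the revenue \eqref{eq:id12} reduces the CES operator's maximization to the quadratic program \eqref{eq:id22} over $\mathcal{Q}$. As already observed, the objective of \eqref{eq:id22} is strictly concave, since its Hessian is negative definite whenever $\lambda,\nu<0$, and $\mathcal{Q}$ is non-empty, closed and convex because it is cut out by the linear constraints \eqref{eq:id8}, \eqref{eq:id9}, and \eqref{eq:id22a}. A strictly concave function attains its maximum at a unique point of a convex feasible set, so \eqref{eq:id22} has a unique maximizer $\ve{\rho^*}$. The role of constraint \eqref{eq:id22a} is essential here: it guarantees that the induced follower strategies in \eqref{eq:id17} remain inside the feasibility box \eqref{eq:id5}, so that $\ve{\rho^*}$ indeed corresponds to an admissible followers' equilibrium and the backward-induction argument is well posed.

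Finally, I would combine the two steps. The unique leader maximizer $\ve{\rho^*}$, together with the uniquely determined follower response $\tilde{x}^{*}_n(t)$ evaluated at $\ve{\rho^*}$, forms a pair that satisfies the Stackelberg-equilibrium conditions; uniqueness of each component forces uniqueness of the pair, so $\Upsilon$ obtains a unique Stackelberg equilibrium.

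I anticipate that the two routine halves, single-valuedness of the follower map and unique solvability of the concave leader program, are straightforward given Proposition~1 and the analysis of \eqref{eq:id22}. The main obstacle is the consistency argument linking them: I must verify that the leader's feasible set $\mathcal{Q}$, and in particular the coupling constraint \eqref{eq:id22a}, is exactly what is needed for the maximizer $\ve{\rho^*}$ to keep the closed-form equilibrium \eqref{eq:id17} admissible under \eqref{eq:id5}, so that no feasible leader action can induce a follower outcome outside the region on which \eqref{eq:id17} was derived.
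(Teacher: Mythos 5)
Your proposal is correct and follows essentially the same route as the paper's own proof: uniqueness of the followers' Nash equilibrium for any fixed $\ve{\rho}$ via Proposition~1, uniqueness of the leader's maximizer from the strict concavity of \eqref{eq:id22} over the convex set $\mathcal{Q}$, and combining the two under backward induction. Your added emphasis on the role of constraint \eqref{eq:id22a} in keeping the closed-form response \eqref{eq:id17} admissible is a useful elaboration, but it does not change the argument, which the paper states more tersely.
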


\begin{proof} The non-cooperative game $\Gamma$ between the users $\mathcal{A}$ has a unique Nash equilibrium for given $a(t)$ and $l_Q(t)$ of the CES operator (see Proposition~1). There is also a unique solution of the CES operator's revenue maximization \eqref{eq:id22}. Therefore, the game $\Upsilon$ obtains a unique Stackelberg equilibrium as soon as the CES operator determines their unique revenue maximizing strategy $\ve{\rho^*}$ while the users $\mathcal{A}$ play their unique Nash equilibrium strategy profile $\ve{x^*(t)}$ at each time $t$.
\end{proof}

Note that, according to backward induction, the Stackelberg equilibrium strategies of the CES operator and the users $\mathcal{A}$, $\ve{\rho^*}$ and $x_n^*(t)\in \ve{x^*(t)}$, are equal to the solution of \eqref{eq:id22} and the resulting $\tilde{x}^{*}_n(t)$ after substituting $a^*(t),~l_Q^*(t) \in \ve{\rho^*}$ in \eqref{eq:id17}, respectively. The Stackelberg equilibrium strategies satisfy

\begin{gather}
C_n(\ve{x^*(t)},~\ve{\rho^*}) \leq C_n(x_n(t),\ve{x_{-n}^*(t)}, \ve{\rho^*}), \nonumber \\
\forall n\in \mathcal{A},~\forall x_n(t)\in \ve{X_n(t)},~\forall t\in \mathcal{T},\label{eq:id23}
\end{gather}
 \begin{equation}
 \begin{split}
R(\ve{X^*},\ve{\rho^*}) \geq R(\ve{X^*},\ve{\rho}),~\forall \ve{\rho} \in \mathcal{Q}
\end{split}\label{eq:id23b}
\end{equation}
where $\ve{x_{-n}^*(t)}$ is the Nash equilibrium strategy profile of the users $\mathcal{A}$ except user $n$ at time $t$, and $\ve{X^*}=(\ve{x^*(1)}^T,\ldots,\ve{x^*(H)}^T)$ is the $H$-tuple of Nash equilibrium strategy profiles of the users $\mathcal{A}$ at each $t$ in response to $\ve{\rho^*}$.
\begin{proposition}
At the Stackelberg equilibrium of the game $\Upsilon$, the Nash equilibrium CES energy trading strategy profile $\ve{x^*(t)}$ of the users $\mathcal{A}$ achieved by playing the game $\Gamma$ is Pareto optimal.
\end{proposition}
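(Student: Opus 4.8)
The plan is to exploit the explicit equilibrium of Proposition~1 together with the additive structure of \eqref{eq:id11}. First I would note that, by \eqref{eq:id17}, at the Nash equilibrium every participating user carries the \emph{same} grid load, $l_n(t)=\tilde{x}^{*}_n(t)-s_n(t)=-\varepsilon(t)$ for all $n\in\mathcal{A}$, irrespective of $s_n(t)$. Fixing the leader's committed variables $a(t),l_Q(t)$ (and the exogenous $l_{\mathcal{P}}(t)$) at their Stackelberg values and substituting $x_n(t)=l_n(t)+s_n(t)$ and $p(t)=\phi_t L(t)+\delta_t$ into \eqref{eq:id11}, the cost of user $n$ reads $C_n(t)=\left(\phi_t\Lambda(t)+\kappa\right)l_n(t)-a(t)s_n(t)$, where $\Lambda(t):=\sum_{m=1}^I l_m(t)$ and $\kappa:=\phi_t(l_Q(t)+l_{\mathcal{P}}(t))+\delta_t-a(t)$ is a constant. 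The crucial feature is that the surplus term $-a(t)s_n(t)$ is strategy-independent, so it cancels in every cost comparison.

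I would then argue by contradiction. Suppose a feasible profile $\{l_n(t)\}$ obeying \eqref{eq:id5} weakly Pareto-dominates the equilibrium $\{-\varepsilon(t)\}$. Using the equilibrium relation $\kappa=\phi_t(I+1)\varepsilon(t)$ read off from \eqref{eq:id17}, the equilibrium cost is $-\phi_t\varepsilon(t)^2-a(t)s_n(t)$, so after cancellation the dominance conditions collapse to $m\,l_n(t)\le-\phi_t\varepsilon(t)^2$ for every $n$ (strict for at least one), where $m:=\phi_t\Lambda(t)+\kappa$ is a single scalar shared by all users. Summing the $I$ inequalities yields one quadratic condition on the aggregate, $\sum_n\big(C_n-C_n^{\ast}\big)=\phi_t\left(\Lambda(t)+\varepsilon(t)\right)\left(\Lambda(t)+I\varepsilon(t)\right)<0$, i.e.\ $\Lambda(t)$ would have to lie strictly between the equilibrium aggregate $-I\varepsilon(t)$ and the value $-\varepsilon(t)$.

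The benign half of the argument treats deviations that preserve the aggregate, $\Lambda(t)=-I\varepsilon(t)$: there $m$ is pinned to its equilibrium value, $l_n\mapsto C_n$ is affine with a common slope, and $\sum_n(l_n(t)+\varepsilon(t))=0$, so the cost changes sum to zero and any user's gain is another's exactly offsetting loss---no Pareto improvement survives mere redistribution. The decisive, and I expect hardest, step is to exclude joint deviations that \emph{change} $\Lambda(t)$: the grid price couples all users through $L(t)$ as a common-pool externality, so a coordinated shift of $\Lambda(t)$ moves $p(t)$ for everyone simultaneously, and the naive attempt---showing $\sum_n(C_n-C_n^{\ast})\ge0$---fails precisely on the open interval above. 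To close it I would push the individual sign requirements $m\,l_n(t)\le-\phi_t\varepsilon(t)^2$, which force all $l_n(t)$ to a single sign, against the box constraints $l_n(t)\in[-s_n(t),0]$ (surplus) or $[0,-s_n(t)]$ (deficit) from \eqref{eq:id5} and the admissible range of $\varepsilon(t)$ fixed by \eqref{eq:id22a}, aiming for an inconsistency with $\Lambda(t)\in(-I\varepsilon(t),-\varepsilon(t))$. This aggregate-changing case is where the competitive leader's constraints \eqref{eq:id22a} must carry the argument, and it is the part I would verify most carefully.
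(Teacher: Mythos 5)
Your aggregate-preserving case ($\Lambda(t)=-I\varepsilon(t)$) is sound and is essentially the paper's treatment of the corresponding case: with both prices frozen, the cost changes are affine with a common slope and sum to zero, so one user's gain is another's loss. The problem is the case you yourself flag as open, and it cannot be closed the way you propose. With the leader's $a(t)$ and $l_Q(t)$ held fixed, feasible follower deviations into your open interval $\Lambda(t)\in(-I\varepsilon(t),-\varepsilon(t))$ that strictly improve \emph{every} follower do exist. Take $I=2$, all users in $\mathcal{D}(t)$, $\phi_t=1$, $\varepsilon(t)=-1$, so $l_n^*=1$, $\Lambda^*=2$, $\kappa=-3$, and your $m^*=\phi_t\varepsilon(t)=-1$ with $C_n^*=-1-a(t)s_n(t)$. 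Move both users to $l_n(t)=0.75$: then $\Lambda(t)=1.5$, $m=-1.5$, and $C_n(t)-C_n^*(t)=-1.5\times 0.75+1=-0.125<0$ for both users, while \eqref{eq:id5} and \eqref{eq:id22a} are satisfied whenever $s_n(t)\leq -1$. This is the usual common-pool externality of load-proportional pricing; the box constraints and \eqref{eq:id22a} do not produce the inconsistency you are hoping for, so the decisive step of your plan fails.

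What closes the argument in the paper is a constraint you dropped: the CES energy balance \eqref{eq:id9}. Any deviation that changes the aggregate $X_{\mathcal{A}}(t)$ (and your counterexample above does, since $X_{\mathcal{A}}(t)=\Lambda(t)+S_{\mathcal{A}}(t)$) alters the storage trajectory and violates \eqref{eq:id9} unless the operator compensates with $l_Q^{'}(t)=l_Q^*(t)-\theta X^*_{\mathcal{A}}(t)$. After that compensation the total grid load, and hence $p^*(t)$, is unchanged, and substituting into \eqref{eq:id23c} and \eqref{eq:id23d} shows that the users' aggregate gain is exactly the operator's loss. In other words, the Pareto comparison in the proposition is over the full player set including the CES operator, with feasibility including \eqref{eq:id9}; under your reading (followers only, leader frozen, feasibility meaning only \eqref{eq:id5}) the claim is false, as the example shows. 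To repair your proof you need to (i) impose \eqref{eq:id9} on joint deviations, which forces the $l_Q$ adjustment, and (ii) bring $C_{CES}(t)$ into the dominance comparison for the aggregate-changing case.
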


\begin{proof} Let us consider the energy trading strategies of the users $\mathcal{A}$ and the CES operator at the Stackelberg equilibrium of the game $\Upsilon$ at any $t\in \mathcal{T}$: $[\ve{x^*(t)},a^*(t),l_Q^*(t)]$. Assume there is any feasible $\ve{x^{'}(t)} (\neq\ve{x^*(t)})$ such that $\ve{x^{'}(t)}$ Pareto dominates $\ve{x^*(t)}$. Define $X_{\mathcal{A}}(t)$ (see Table~\ref{notation}) at $\ve{x^{'}(t)}$ and $\ve{x^*(t)}$ as $X^{'}_{\mathcal{A}}(t)$ and $X^*_{\mathcal{A}}(t)$, respectively, where $X^{'}_{\mathcal{A}}(t)= X^*_{\mathcal{A}}(t)+\theta X^*_{\mathcal{A}}(t)$. First, assume $\theta$ is a non-zero scalar. Due to the introduced change of $X^*_{\mathcal{A}}(t)$ to $X^{'}_{\mathcal{A}}(t)$, the CES device is forced to over-charge or over-discharge violating \eqref{eq:id9}. Therefore, the CES operator has to divert from their strategy $l_Q^*(t)$ to $l_Q^{'}(t)=l_Q^*(t)-\theta X^*_{\mathcal{A}}(t)$ in order to satisfy \eqref{eq:id9} that ensures the sustainable operation of the CES device throughout the day. For example, if the CES charge level rises due to the introduced change, the CES operator has to discharge the increased amount of energy to the grid. Note that, in doing so, the Stackelberg equilibrium grid price $p^*(t)$ does not change as the total grid load does not change.

Then at the new operating point $C_{CES}(t)$ is given by
 \begin{gather}
C_{CES}(t)=a^*(t)X^{'}_{\mathcal{A}}(t)+p^*(t)l_Q^{'}(t),\label{eq:id23c}
\end{gather}
and $C_{\mathcal{A}}(t)$ is
\begin{gather}
C_{\mathcal{A}}(t)=p^*(t)\left(X^{'}_{\mathcal{A}}(t)-S_{\mathcal{A}}(t)\right)-a^*(t)X^{'}_{\mathcal{A}}(t)\label{eq:id23d}
\end{gather}
where $S_{\mathcal{A}}(t)={\sum_{n=1}^Is_n(t)}$. By substituting $X^{'}_{\mathcal{A}}(t)=X^*_{\mathcal{A}}(t)+\theta X^*_{\mathcal{A}}(t)$ and $l_Q^{'}(t)=l_Q^*(t)-\theta X^*_{\mathcal{A}}(t)$ into \eqref{eq:id23c} and \eqref{eq:id23d}, it is evident that if $C_{\mathcal{A}}(t)$ decreases, then $C_{CES}(t)$ increases and vice versa. Such a situation is not led by the CES operator and hence there is no feasible $\ve{x^{'}(t)}\in \mathcal{X}\backslash\ve{x^*(t)}$ that Pareto dominates $\ve{x^*(t)}$ after obtaining the Stackelberg equilibrium. 

Moreover, when $\theta =0$, $X^{'}_{\mathcal{A}}(t)= X^*_{\mathcal{A}}(t)$. This implies that if at least one user in $\mathcal{A}$ changes their Nash equilibrium strategy to another strategy, then the other users have to change their aggregate CES energy trading strategies by the same energy amount such that $X^{'}_{\mathcal{A}}(t)= X^*_{\mathcal{A}}(t)$. In this situation, the grid price $p^*(t)$ and the unit energy price of the CES device $a^*(t)$ are unchanged. Hence, a reduction in one user's energy cost due to a change in their operating point would lead to an increase in the cost of at least one of the other users. Hence, with $\theta=0$, for given $[a^*(t),l_Q^*(t)]$, it is infeasible to adopt any $\ve{x^{'}(t)}\in \mathcal{X}\backslash\ve{x^*(t)}$ such that $C_n(\ve{x^{'}(t)})\leq C_n(\ve{x^*(t)}), \forall n\in \mathcal{A}$ and $C_n(\ve{x^{'}(t)})< C_n(\ve{x^*(t)})$ for some $n\in \mathcal{A}$. This concludes the proof of the proposition.
\end{proof}

A two-step iterative algorithm was used to determine the Stackelberg equilibrium in the game $\Upsilon$ as shown in Algorithm~\ref{Algo1}. In the first step, the CES operator determines $\ve{\rho}$ that maximizes \eqref{eq:id12} under constraints \eqref{eq:id8}, \eqref{eq:id9}, and \eqref{eq:id22a} by using the CES energy trading strategies of the users $\mathcal{A}$. In the second step, the users $\mathcal{A}$ solve \eqref{eq:id13b} using \eqref{eq:id17}. To achieve the convergence of the algorithm, similar to \cite{Atzeni}, we adopt the termination criterion ${\|\ve{\rho}^{(r)}-\ve{\rho}^{(r-1)}\|}_2/{\|\ve{\rho}^{(r)}\|}_2 \leq\tau$ where $\ve{\rho}^{(r)}$ represents $\ve{\rho}$ calculated at the iteration $r$.

The users' response in \eqref{eq:id17} is their optimal response at each time $t$ for a given feasible $\ve{\rho}$ by the CES operator. This implies that, in Algorithm~\ref{Algo1}, the CES energy trading strategies of the users $\mathcal{A}$ will converge once $\ve{\rho}$ converges to a fixed point. Consequently, Algorithm~\ref{Algo1} converges as the variation of $\ve{\rho}$ decreases after a certain number of iterations. Hence, Algorithm~\ref{Algo1} converges approximately to a fixed point once the termination criterion is satisfied for sufficiently small $\tau$.

\begin{algorithm}
\caption{Game to obtain the Stackelberg equilibrium}\label{Algo1}
\begin{algorithmic}[1]
\renewcommand{\algorithmicrequire}{\textbf{Step 1:}}
\renewcommand{\algorithmicensure}{\textbf{Step 2:}}
\REQUIRE
\STATE $r \leftarrow  1$.
\NoThen
\IF{$r=1$}
\STATE{The CES operator selects a feasible starting point for $\ve{\rho}$ and broadcasts it to the users $\mathcal{A}$.}
\ELSE
\STATE The CES operator maximizes \eqref{eq:id12} subject to constraints \eqref{eq:id8}, \eqref{eq:id9}, and \eqref{eq:id22a} using $\tilde{\ve{X}}^*$ and broadcasts  $\ve{\rho}$ to the users $\mathcal{A}$.
\ENDIF
\ENSURE
\STATE Each user $n\in \mathcal{A}$ determines $\tilde{x}^{*}_n(t)$ at each time $t$ using \eqref{eq:id17} and $\ve{\rho}$, and $\tilde{\ve{X}}^*=\big(\ve{\tilde{x}^{*}(1)}^T,\dotsm,\ve{\tilde{x}^{*}(H)}^T\big)$ is announced to the CES operator.
\STATE $r \leftarrow r+1$. 
\STATE Repeat from 2 until  ${\|\ve{\rho}^{(r)}-\ve{\rho}^{(r-1)}\|}_2/{\|\ve{\rho}^{(r)}\|}_2 \leq\tau$.
\STATE Return $\tilde{\ve{X}}^*$ and $\ve{\rho}$ as the Stackelberg equilibrium.
\end{algorithmic}
\end{algorithm}

\section{Results and discussion}\label{sec:5}

For numerical simulations, we consider a residential community of 40 people with 30\%, 40\%, and 50\% participating users. We obtain the average daily domestic PV power generation and user electricity demand profiles from \cite{WPNdata}. We use $H=48$, $\Delta t=30~\textrm{min}$, $Q_M=80$~kWh, $q(0)=0.25Q_M$, $\alpha=0.9^{1/48}$, $\beta^+=0.9$, $\beta^-=1.1$ \cite{Atzeni}, and $\tau=0.002$. Parameter $\phi_t$ is selected in each case such that $\phi_{\text{peak}}=1.5\times \phi_{\text{off peak}}$ where the peak period is 16:00-23:00. The value of $\phi_{\text{peak}}$ is then set such that the predicted daily unit price range of the grid is the same as a reference time-of-use unit electricity price range used in Sydney, Australia \cite{ausgrid}. $\delta_t$ is a constant across time such that predicted average grid price matches the average price of the reference signal. To compare results, we consider a baseline energy trading system without a CES device where the users $\mathcal{A}$ trade energy exclusively with the main power grid that has the same energy cost model in Section~\ref{sec:3}. In particular, PV energy producers sell all surplus PV energy directly to the grid.

\subsection{Performance of Algorithm~\ref{Algo1}}
\begin{figure}[t!]
\centering
\includegraphics[width=0.8\columnwidth]{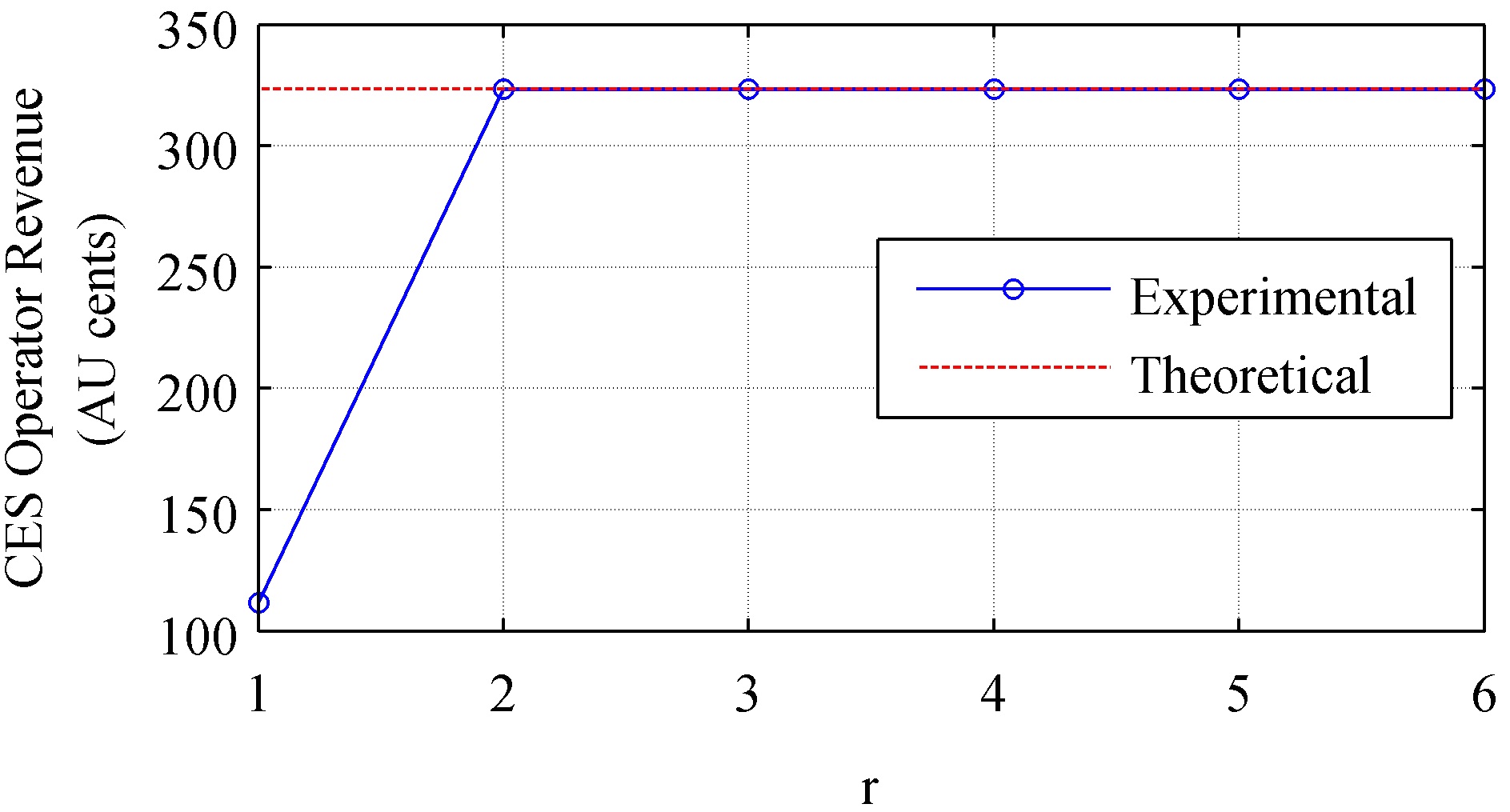}
\caption{Convergence of Algorithm 1 with 30\% participating users in the fully-competitive CES model.}
\label{fig:Algo1}
\end{figure}
In this section, the convergence of Algorithm~\ref{Algo1} is examined. To show that Algorithm~\ref{Algo1} converges to the theoretical optimum at the Stackelberg equilibrium of the fully-competitive CES model, we consider the optimization \eqref{eq:id22} that gives the CES operator's theoretical optimal revenue at the Stackelberg equilibrium according to backward induction. To solve the optimization problem given in \eqref{eq:id22}, we used the interior-point algorithm specified in the ``fmincon" solver in the MATLAB optimization toolbox. Fig.~\ref{fig:Algo1} illustrates that Algorithm~\ref{Algo1} reaches the theoretical optimal revenue of the CES operator within the first 2 iterations with 30\% participating users. Here, as the initial conditions, the vector $\ve{l}^{(1)}_{\ve{Q}}$ was considered as $\ve{0}$. Then, the elements of $\ve{a}^{(1)}$ were selected such that $\tilde{x}^{*}_n(t)$ values, obtained by using $\ve{l}^{(1)}_{\ve{Q}}$,~$\ve{a}^{(1)}$, and \eqref{eq:id17}, satisfy \eqref{eq:id8},~\eqref{eq:id9}, and \eqref{eq:id22a}. 
However, Algorithm~\ref{Algo1} reaches the termination criterion ${\|\ve{\rho}^{(r)}-\ve{\rho}^{(r-1)}\|}_2/{\|\ve{\rho}^{(r)}\|}_2 \leq\tau$ after $r=4$ iterations because the CES operator adjusts their strategy $\ve{\rho}$ until the termination criterion is achieved. Similarly, the algorithm converges after $r=11$ and $r=13$ iterations when the community has 40\% and 50\% participating users, respectively.

\subsection{Preliminary Study of Three Energy Trading Systems}
Consider the case with 40\% participating users to demonstrate our CES models. In Fig.~\ref{fig:40PUprices}, we illustrate the variations of price signals of the CES operator and the grid in the fully-competitive CES model and the grid price of the baseline. 
\begin{figure}[b!]
\centering
\includegraphics[width=0.8\columnwidth]{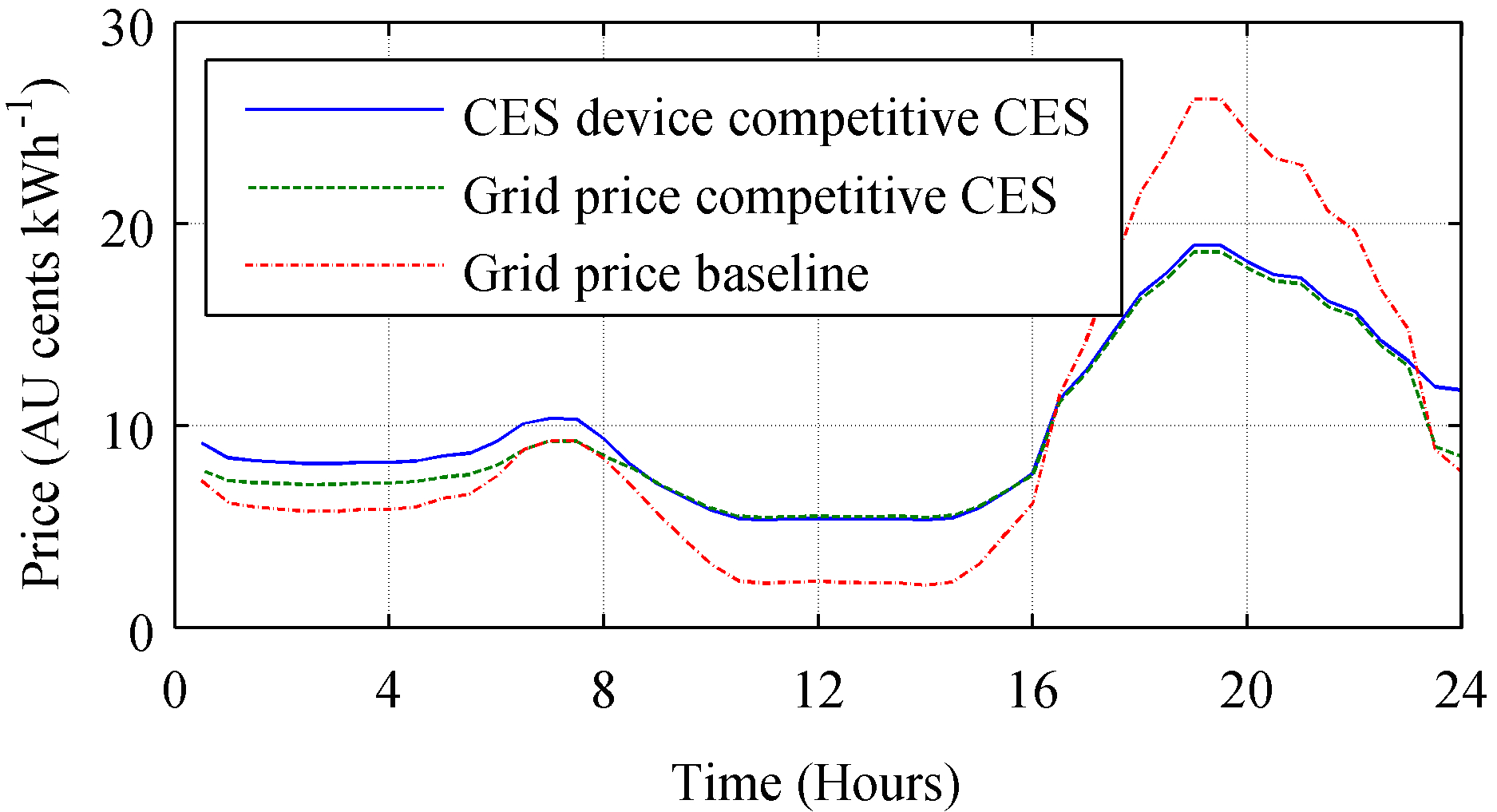}
\caption{Variation of electricity prices with 40\% participating users in the fully-competitive CES model and the baseline.}
\label{fig:40PUprices}
\end{figure}
Fig.~\ref{fig:40PUprices} shows that the introduction of the CES device reduces the peak grid electricity price in the competitive CES model compared to the baseline. Before 09:00, when there is little PV energy and all participating users are deficit users, the CES operator sets a price above the equilibrium grid price such that it is unfavorable for any of the deficit users to purchase energy. Subsequently, the users $\mathcal{A}$ may not buy energy from the CES device during this period. During the day, when PV energy is plentiful, and through the evening peak, when electricity demand is greatest, it is favorable for the CES operator to trade energy with the users $\mathcal{A}$. Therefore, at these times the CES price approaches the equilibrium grid price in a similar way to the benevolent CES model. In turn, the users $\mathcal{A}$ transfer the majority of their energy transactions to the CES device. In the benevolent and the centralized cooperative CES models, there are more energy transactions between the CES device and the users $\mathcal{A}$ than in the fully-competitive CES model. Fig.~\ref{fig:40PUgridPrice} shows that this causes a greater reduction in the peak grid price.
\begin{figure}[b!]
\centering
\includegraphics[width=0.8\columnwidth]{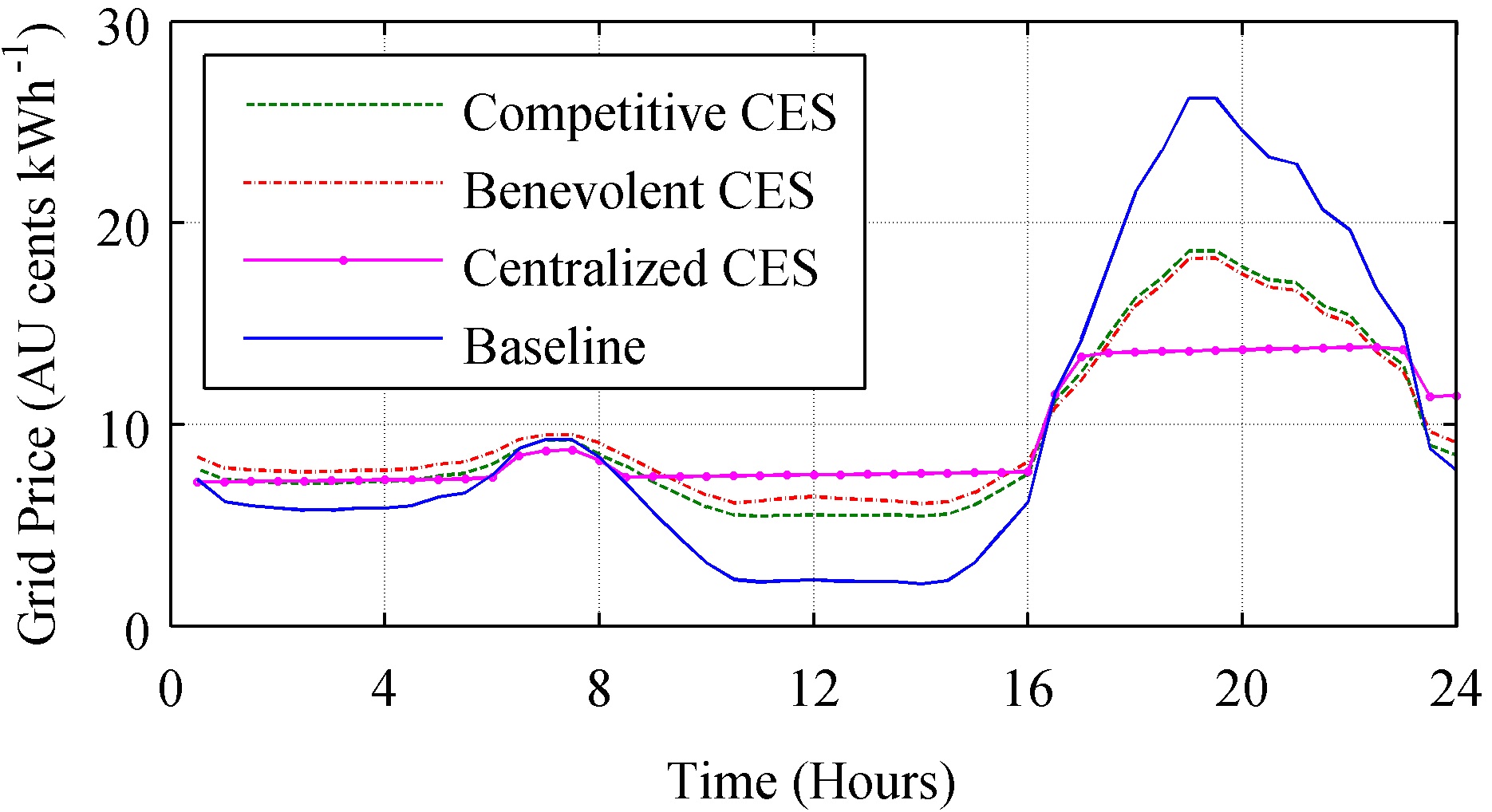}
\caption{Variation of grid electricity prices of the different CES operator models with 40\% participating users.}
\label{fig:40PUgridPrice}
\end{figure}

More energy transactions between the CES device and the users $\mathcal{A}$ require greater CES storage capabilities (see Fig.~\ref{fig:40PUCEScharge}), and subsequently, there are more energy transactions between the CES device and the grid: the fully-competitive case requires 58\% less absolute energy traded between the CES device and the grid than the benevolent case.
\begin{figure}[b!]
\centering
\includegraphics[width=0.8\columnwidth]{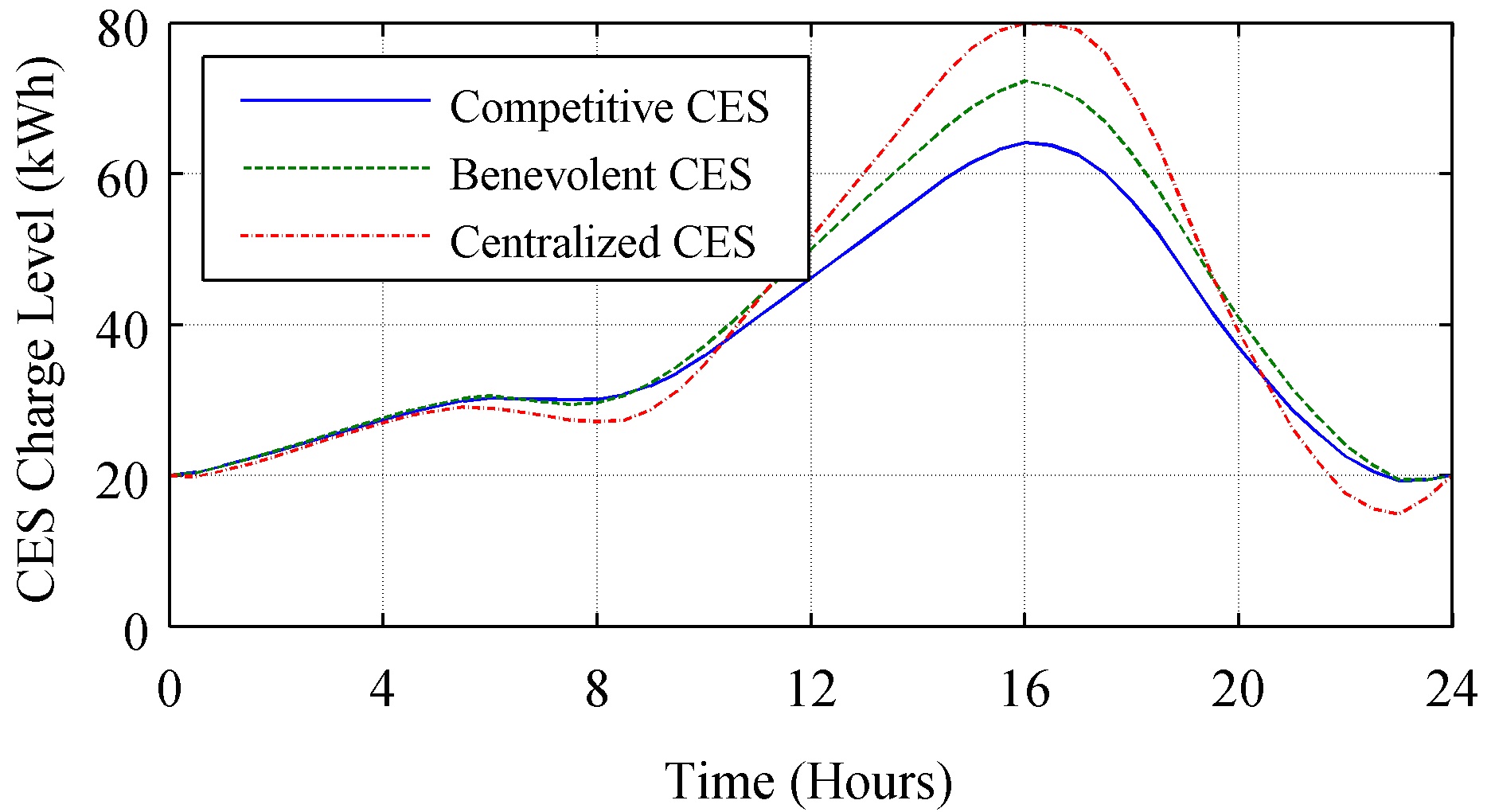}
\caption{Charge levels of the CES device for the different CES operator models with 40\% participating users.}
\label{fig:40PUCEScharge}
\end{figure}

Fig.~\ref{fig:sensitivity} shows the sensitivity of community benefits of the three systems to CES battery capacity. 
\begin{figure}[t!]
\centering
\includegraphics[width=0.82\columnwidth]{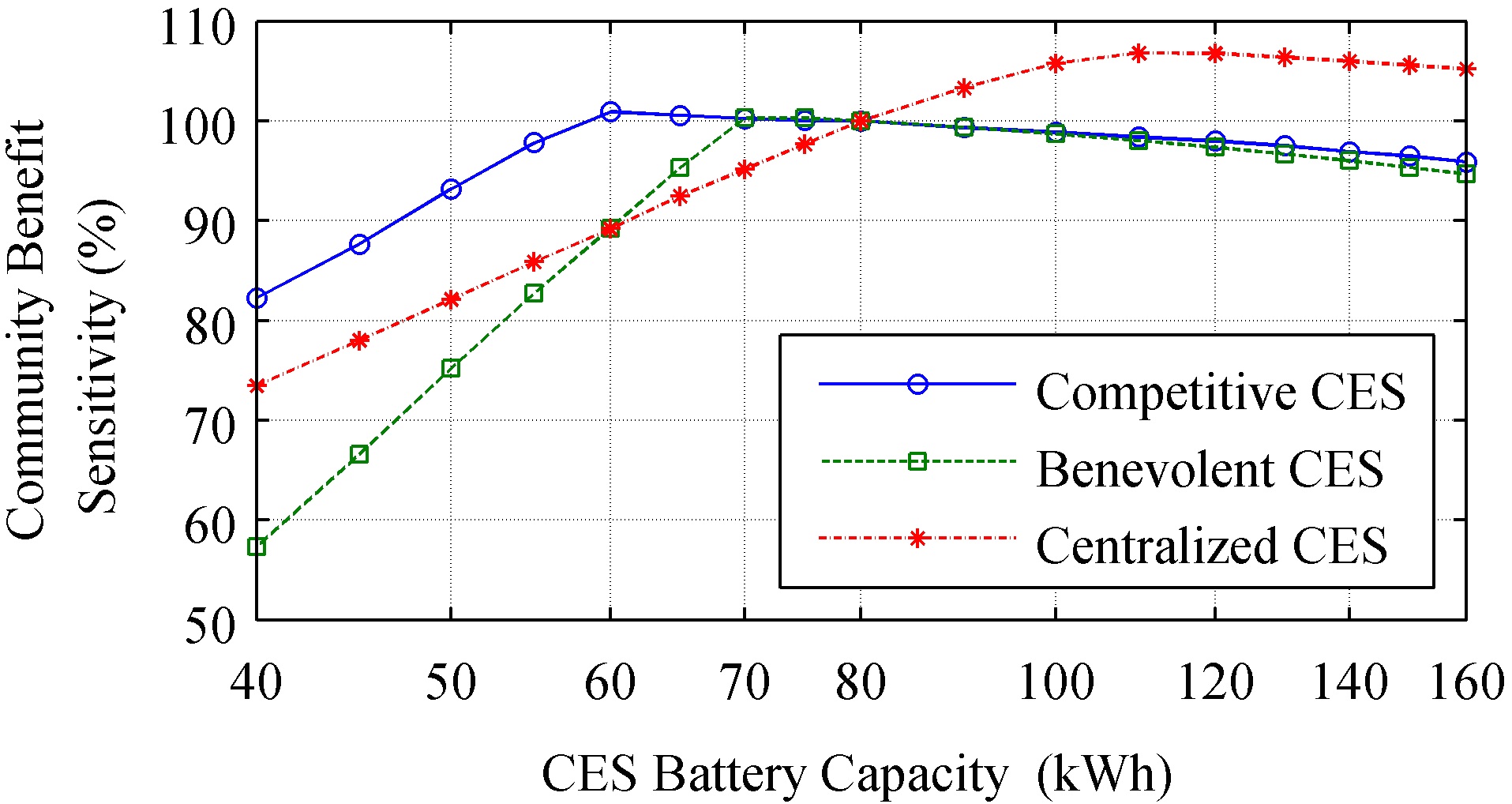}
\caption{Sensitivity of community benefit to energy storage capacity for the different CES operator models with 40\% participating users. }
\label{fig:sensitivity}
\end{figure}
The community benefit is the sum of absolute electricity cost savings of the users $\mathcal{A}\cup\mathcal{P}$ compared to the baseline and the CES revenue. As part of the CES revenue obtains from energy costs incurred by the users $\mathcal{A}$ (see \eqref{eq:id12}), the community benefit reflects the total reduction in costs paid by the community (all users and the CES operator) to the grid compared to the baseline. All models have optimal energy storage requirements corresponding to the peaks in Fig.~\ref{fig:sensitivity}. The fully-competitive CES model requires a battery capacity less than 70 kWh to provide peak community benefit compared to the other models. The centralized CES model considers load management of the entire community, not only the users $\mathcal{A}$, and therefore, requires a significantly larger storage capacity for optimal performance.

Table~\ref{tab:performance} compares the performance of the three systems over several metrics with different percentages of the users $\mathcal{A}$. 
\begin{table*}[t!]
\caption{Performance of the three CES models with different fractions of participating users (PU). PAR is peak-to-average ratio. }
\begin{center}
\begin{tabular}{ |p{2.5cm} I c | c | c I c | c | c I c | c | c I c | c | c |}
\hline
Performance Metric & \multicolumn{3}{cI}{Average PU Cost Savings (\%)} & \multicolumn{3}{cI}{\makecell*{CES Operator Revenue\\ (AU cents)}} & \multicolumn{3}{cI}{\makecell*{Community Benefit \\(AU cents)}} & \multicolumn{3}{c|}{PAR Reduction (\%)} \\\hline
PU Fraction     & 30\%   & 40\%   & 50\%   & 30\%& 40\%& 50\%& 30\%& 40\%& 50\%& 30\%  & 40\%  & 50\%    \\\hlinewd{1.2pt}
Competitive CES & 27.6 & 29.4 & 31.4 & 323 & 344 & 373 & 945 & 1023& 1123& 30.3& 31.7& 33.1 \\\hline
Benevolent CES  & 30.7 & 32.4 & 34.9 & 229 & 191 & 160 & 856 & 852 & 889 & 33.8& 35.9& 38.3 \\\hline
Centralized CES & 61.2 & 62.0 & 64.2 & -22 & -72 & -150& 1193& 1267& 1369& 37.0& 38.2& 39.5 \\\hline
\end{tabular}
\end{center}\mbox{}
\label{tab:performance}
\end{table*}
Here, the percentage cost savings and the peak-to-average ratio reductions are calculated compared to the baseline. When combined with the CES inefficiencies and price signal limitations, the CES operator revenue reduces from the fully-competitive through benevolent to centralized case in each user-percentage case. Conversely, as the CES device enacts greater demand-side management by reducing the peak-to-average ratio, the average cost saving of a user in $\mathcal{A}$ increases. The average cost saving of a user in $\mathcal{A}$ is similar under the fully-competitive and benevolent CES models, then increases notably under the centralized CES model. In fairness point of view, the users $\mathcal{A}$ enjoy greater savings than the users $\mathcal{P}$. For example, with 40\% participating users in the fully-competitive CES model, a participating user receives 29.4\% cost saving on average and a non-participating user only receives 7.92\% cost saving on average. The CES revenue is greatest for the fully-competitive CES model and decreases significantly to a loss under the centralized CES model (see Table~\ref{tab:performance}). This is because the fully-competitive model allows complete freedom to the CES operator to maximize revenue \eqref{eq:id22} while the benevolent CES operator is restricted to set a price and the centralized model eliminates the CES price signal. Overall, the total community benefit of introducing the CES device is greatest for the centralized CES followed by the fully-competitive CES and then the benevolent CES (see Table~\ref{tab:performance}). On average, the fully-competitive CES model provides 81\% of the centralized model's economic benefit compared with only 68\% for the benevolent CES model. Overall, the fully-competitive system gives the best trade-off of cost benefits between the CES operator and the users $\mathcal{A}$ of the three systems while delivering significant load leveling.

Fig.~\ref{costDistribution} depicts the distribution of cost savings for individual participating users when the fully-competitive system has 30\% participating users who have surplus energy distributions as shown in Fig.~\ref{surplusDist}. 
 \begin{figure}[t!]
\centering
\includegraphics[width=0.82\columnwidth]{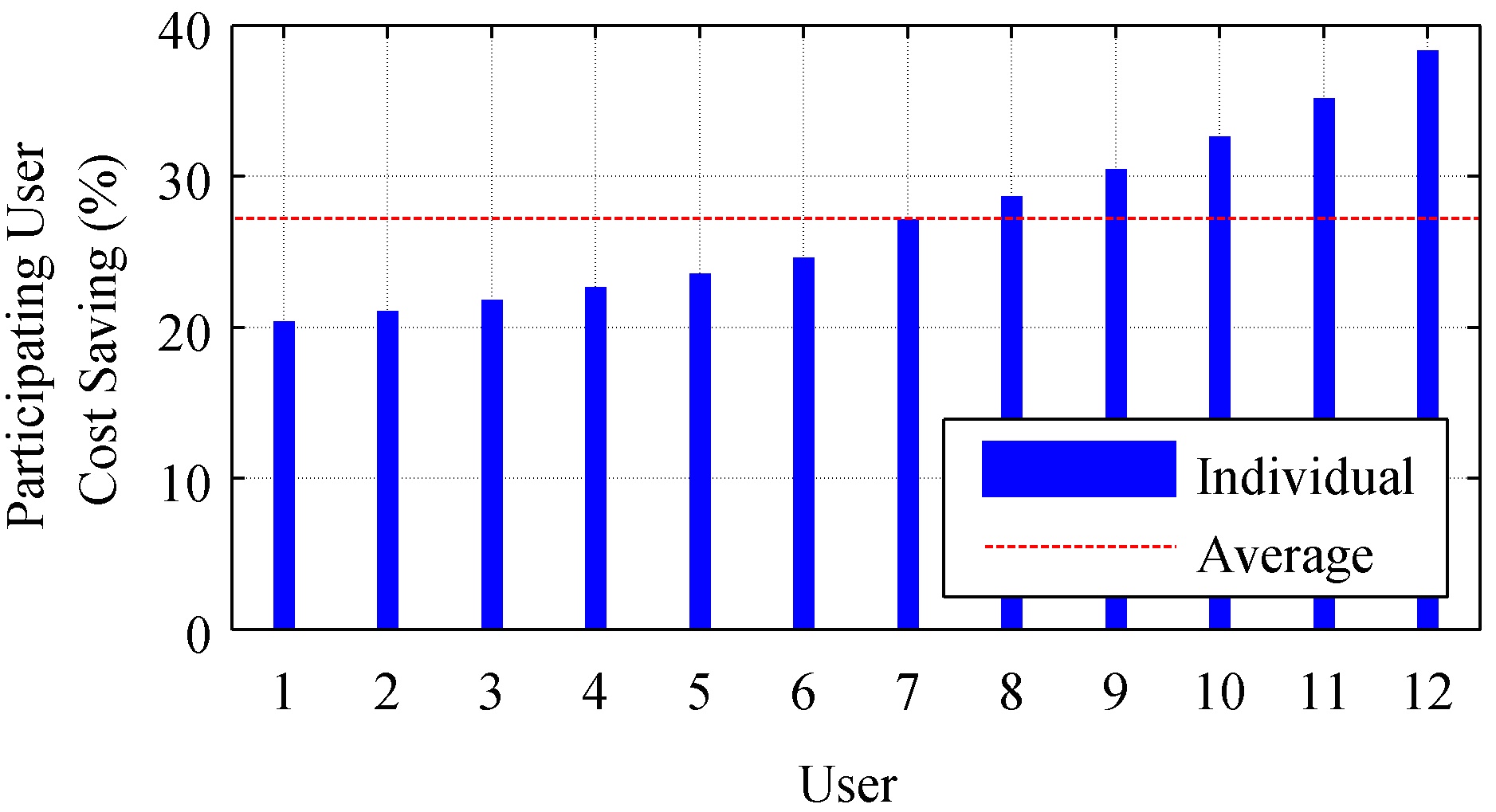}
\caption{Distribution of individual participating user cost savings with 30\% users in the fully-competitive CES model. }
\label{costDistribution}
\end{figure}
\begin{figure}[t!]
\centering
\includegraphics[width=0.82\columnwidth]{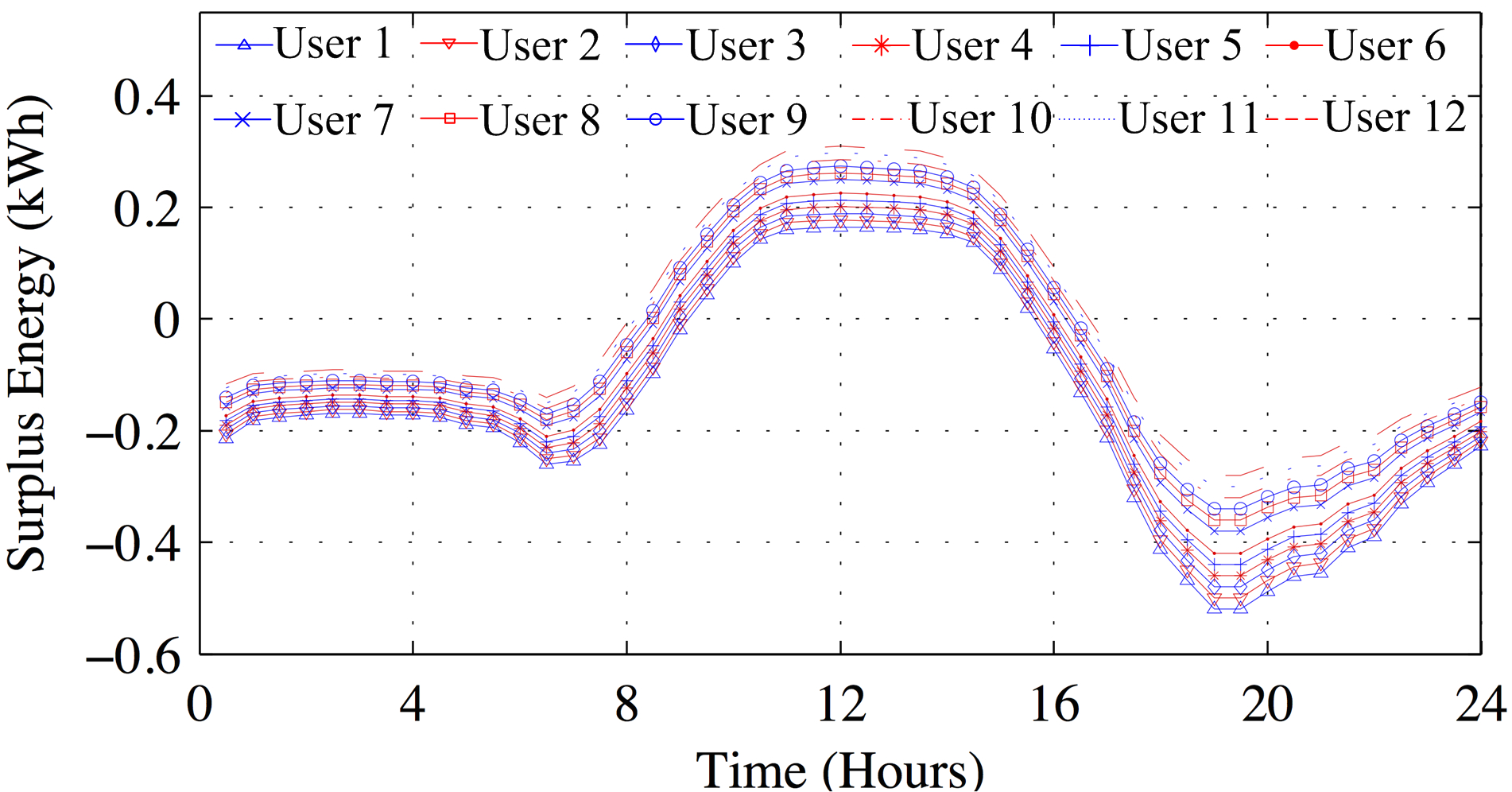}
\caption{Distribution of surplus energy of 30\% participating users in the community.}
\label{surplusDist}
\end{figure}
According to these figures, the users those who have a greater amount of surplus energy are more benefited by the system than the users with less amount of surplus energy. 

Having insights for the feasibility of the fully-competitive CES model, to investigate the effects of imperfect energy forecasts on the system, we introduce proportional variance white noise errors to the PV power and energy demand forecasts \cite{TSG1}. When averaged over a large number of simulations, the mean absolute percentage energy forecast error is equal to half of percentage white noise variance \cite{Ted}. For 40\% participating user case, when the mean absolute percentage forecast error changes from 0\% to 50\%, the average community user cost saving compared to the baseline was only reduced from 11.95\% to 11.84\%, and the average participating user saving was reduced from 29.37\% to 29.21\%. Here, for each 10\% increase in mean absolute forecast error, the average community user cost saving decreased by nearly 0.02\% while the average participating user cost saving declined by approximately 0.04\%. Similar trends were observed for both 30\% and 50\% participating user cases. Therefore, the cost benefits of the fully-competitive CES model are robust to imperfect demand and PV energy forecasts.

\section{Conclusion}\label{sec:6}
Community energy storage (CES) devices offer significant opportunities for user electricity cost savings, operator revenue, and peak-to-average ratio reduction of the grid. These benefits were shown to increase with the fraction of the participating users in the community. We have investigated three different CES operator models for community-level demand-side management and presented a fully-competitive CES operator model in a non-cooperative Stackelberg game that produces the best trade-off of operating environment between the CES operator and the users.

Interesting future work could focus on introducing more energy trading flexibility from different distributed generation sources and investigating fairness aspects between participation and non-participation in the decentralized fully-competitive energy trading system. Moreover, the proposed fully-competitive energy trading strategies in this paper could be extended to achieve socially optimal behavior of the system.


 \newcommand{\noop}[1]{}

\end{document}